 \newtheorem{theorem}{Theorem}[section]
\newtheorem{example}{Example}%
\newtheorem{remark}{Remark}%
\newcommand{\assign}{:=}
\newcommand{\R}{\mathbb{R}}
\newtheorem{definition}{Definition}%
\newcommand{\ba}{\begin{eqnarray}}
\newcommand{\ea}{\end{eqnarray}}
\newcommand\footnoteref[1]{\protected@xdef\@thefnmark{\ref{#1}}\@footnotemark}
\providecommand{\keywords}[1]
{
  \small	
  \textbf{\textit{Keywords---}} #1
}
\begin{document}

\title{Rough volatility: fact or artefact?}

\author{Rama Cont \\Mathematical Institute, University of Oxford\and Purba Das\footnote{Email: purbadas@umich.edu}\\ Department of Mathematics, University of Michigan
\footnote{We thank Peter Friz, James-Michael Leahy, Mathieu Rosenbaum and Alexander Schied for helpful comments and discussions.}}

\maketitle

\begin{abstract}
We investigate the statistical evidence for the use of `rough' fractional processes with Hurst exponent $H< 0.5$ for  modeling the volatility of financial assets, using a model-free approach.
We introduce a non-parametric method for estimating the roughness of a function based on discrete sample, using the concept of normalized $p$-th variation along a  sequence of partitions. Detailed numerical experiments based on sample paths of fractional Brownian motion and other fractional processes reveal good finite sample performance of our estimator for measuring the roughness of sample paths of stochastic processes.   We then apply this method to estimate the roughness of realized volatility signals based on high-frequency observations. Detailed numerical experiments based on  stochastic volatility models show that, even when the instantaneous volatility has diffusive dynamics with the same roughness as Brownian motion, the realized volatility exhibits rough behaviour corresponding to a Hurst exponent significantly smaller than $0.5$.
Comparison of roughness estimates  for realized and instantaneous volatility in  fractional volatility models with different values of Hurst exponent  shows that, irrespective of the roughness of the spot volatility process, realized volatility always exhibits `rough' behaviour with an apparent Hurst index $\widehat{H}<0.5$.
These results suggest that  the origin of the roughness observed in realized volatility time-series lies in the estimation error rather than the volatility process itself.
\end{abstract}

\keywords{roughness, variation index, $p$-th variation, realized volatility, instantaneous volatility, fractional Brownian motion, Hurst exponent, high-frequency data}



\newpage
\tableofcontents
\newpage
\section{Introduction}
\subsection{Fractional processes in finance: from long-range dependence to `rough volatility'}

Beginning with  \cite{mandelbrot1968}, fractional Brownian motion and fractional Gaussian noise have been used as building blocks of stochastic models of various phenomena in physics, engineering \citep{levy2005} and finance \citep{baillie1996,bollerslev1996,comte1998,cont2007,gatheral2014,Rogers1997,willinger1999}. 
Fractional Brownian motion has two remarkable properties which have contributed towards its adoption as a building block in stochastic models: first, its ability to model long-range dependence, as measured by the slow decay $\sim T^{2H-2}$ of auto-correlation functions of increments, where $0<H<1$ is the Hurst exponent;  second, its ability to generate trajectories which have varying levels of H\"older regularity (`roughness').
The former is a property that manifests itself over long time scales while the latter manifests itself over short time scales and, in general, these two properties are unrelated.
But in the case of fractional Brownian motion, the two properties are linked through self-similarity and governed by the Hurst exponent $0<H<1$: for $H>1/2$ one obtains long-range dependence in increments and trajectories smoother than Brownian motion while for $H<1/2$ one obtains `anti-correlated' increments and trajectories rougher than Brownian motion\footnote{Incidentally, Hurst and H\"older happen to have the same initials, adding to the confusion...}. Processes driven by fractional Gaussian noise with $H<1/2$ are thus sometimes referred to as `rough processes'.

In early applications to financial data (\cite{baillie1996,bollerslev1996,comte1998,willinger1999}), fractional processes were adopted in order to model {\it long range dependence} effects in financial time series \cite{cont2005}. More specifically, statistical evidence of {\it volatility clustering} \cite{cont2007} - positive dependence of the amplitude of returns over long time scales - led to the development of  stochastic volatility models driven by fractional Brownian motion. A well-known example of such a fractional stochastic volatility model is the one proposed by  \cite{comte1998} who  modelled the dynamics of the (instantaneous) volatility $\sigma(t)$ of an asset as:
\begin{equation}
Y(t)=\ln \sigma(t)\qquad dY(t)=-\gamma Y(t) dt + \theta dB^H(t)\label{eq.comterenault}
\end{equation}
where $B^H$ is a fractional Brownian motion (fBM) with Hurst exponent $H$.
This long-range dependence in volatility is modelled by choosing   $1> H>1/2$ (\cite{comte1998,bollerslev1996,breidt1998,hurvich2005,lahiri2020}).

A recent strand of literature, starting with \cite{gatheral2014}, has suggested the use of fractional Brownian models with $H<1/2$ for modelling volatility.  Unlike previous studies based on auto-correlations of various volatility estimators over long time scales \citep{baillie1996,bollerslev1996,cont2005}, \cite{gatheral2014} rely on the analysis of the behaviour of volatility estimators over short intraday time scales in order to assess the `roughness' of these signals and  concluded that volatility is `rough' i.e. has  paths with a H\"older regularity which is strictly less than $1/2$, suggest to use stochastic models with sample paths rougher than Brownian motion.

However, it has not been lost on experts working in this area that previous estimation results for fractional models in the  literature on long-range dependence in volatility,  pointed  towards Hurst exponents $H> 0.5$  (and around $0.55$) \citep{baillie1996,comte1998,lahiri2020}while the recent `rough volatility' literature indicates Hurst exponents much smaller than $0.5$ and closer to $0.1$.
Together with the well-known statistical issues plaguing the estimation of Hurst exponents \citep{beran,roger2019}, these conflicting results call for a critical examination of the empirical evidence for `rough volatility'.


Compounding this issue is the fact that (spot) volatility is not directly observed but estimated from price series, with an inherent estimation error which has been the subject of  many studies \citep{barndorff2002,jacod2011,lahiri2020}. This estimation error is far from i.i.d.: it is known to  possess path-dependent features (see \cite{jacod2011}). As a result, measures of roughness for realized volatility indicators may be quite different from those of the underlying `spot volatility'. This is simply because the convergence of high-frequency volatility estimators in $L^p$ norms does not imply in any way their functional convergence in H\"older norms or other norms related to roughness.

As already pointed out by  \cite{Rogers1997}, these two properties, namely the short-range behaviour which determines the roughness of the path, and the long-range dependence property, can (and should) be modeled through different mechanisms.  \cite{pakkanen2021} discuss several such approaches.

The focus of the literature on parametric models based on fractional Brownian motion or fractional Gaussian noise concentrates these two, very different, properties in a single parameter: the Hurst exponent $H$ \citep{pakkanen2022,fukasawa2019,gatheral2014}. 
Such parametric approaches proceed as follows: one estimates a parametric model for volatility dynamics based on some fractional Gaussian driving noise with Hurst exponent $0< H<1$ using a MLE \citep{fukasawa2019} or method of moments \citep{pakkanen2022}. Then, based on the estimated value of this parameter $H$, one concludes that ``volatility is rough" if $\hat{H}< 0.5$.

The validity of such approaches hinges   on the assumption that the class of models used is well-specified. As pointed out by \cite{pakkanen2021}, this is unlikely to be the case for SDEs driven by fractional Gaussian noise if one wants to accommodate both (long-range) dependence properties and (short-range) roughness  properties.

To avoid this caveat, we propose a model-free {\it non-parametric} method which focuses solely on the roughness properties of sample paths.  Although less ambitious in its scope -we only focus on roughness properties rather than developing a full model for volatility dynamics- our approach is robust to the specification errors and estimation biases which plague parametric methods.

\subsection{Contribution}
We address these questions in detail by re-examining the statistical evidence from high-frequency financial data  in an attempt to clarify whether the assertion that `volatility is rough' (i.e. rougher than typical paths of Brownian motion) is supported  by empirical evidence. We investigate the statistical evidence for the use of `rough' fractional processes with Hurst exponent $H< 0.5$ for the modelling of volatility of financial assets, using a non-parametric, model-free approach.

We introduce a non-parametric method for estimating the roughness of a function/path based on a (high-frequency) discrete sample, using the concept of normalized $p$-th variation along a sequence of partitions, and discuss the consistency of our estimator in a pathwise setting. We investigate the finite sample performance of our estimator for measuring the roughness of sample paths of stochastic processes using detailed numerical experiments based on sample paths of fractional Brownian motion and other fractional processes. We then apply this method to estimate the roughness of realized volatility signals based on high-frequency observations. Through a detailed numerical experiment based on a  stochastic volatility model, we show that even when the instantaneous (spot) volatility has diffusive dynamics with the same roughness as Brownian motion, the realized volatility exhibits rough behaviour corresponding to a Hurst exponent significantly smaller than $0.5$. Similar behavior is observed in financial data as well, which suggests that the origin of the roughness observed in realized volatility time-series lie in the estimation error rather than the volatility process itself.
Comparison of roughness estimates  for realized and instantaneous volatility in fractional volatility models for different values of Hurst parameter $H$ shows that whatever the value of $H$ for the (spot) volatility process, realized volatility always exhibits `rough' behaviour.  

Our results are broadly consistent with the observations by \cite{roger2019}, but we pinpoint more precisely the origin of the apparent `rough' behaviour of volatility as being the estimation error inherent in the estimation of realized volatility (sometimes  known as microstructure noise). In particular, our results question whether the empirical evidence presented from high-frequency volatility estimates supports the `rough volatility' hypothesis.


\section{Measuring the roughness  of a path}

Determining the roughness of realized volatility from a sample path plays a crucial role in model specification \cite{gatheral2014,fukasawa2019}. In practice, we observe only a single price path so one is faced with the problem of determining the roughness of a process from a single price path sampled at high frequency.  We present in this section several concepts for measuring the roughness of a path and discuss how they may be used to design estimators from high-frequency observations.

\subsection{{$p$}-th variation and roughness index of a path}
Consider a sequence of partitions $\pi=(\pi^n)_{n\geq 1}$ of $[0,T]$ where  
\ba \pi^n=\left(0=t^{n}_0<t^n_1<\cdots<t^{n}_{N(\pi^n)}=T\right) \ea
represents observation times `at frequency $n$'.
We denote $N(\pi^n)$ to be the number of intervals in the partition $\pi^n$. Denote respectively by
$| \pi^n | =\sup_{i=1,\cdots,N(\pi^n)}| t^n_i-t^n_{i-1}|,$ and $ \underline{\pi^n }=\inf_{i=1,\cdots,N(\pi^n)}| t^n_i-t^n_{i-1}|, $ 
the size of the largest and the smallest interval of $\pi^n$. In this paper, we will always assume
$$| \pi^n | =\sup_{i=1,\cdots,N(\pi^n)}| t^n_i-t^n_{i-1}| \mathop{\to}^{n\to\infty} 0. $$



The concept of $p$-th variation along a sequence of partitions  $\pi=\left(\pi^n\right)_{n\geq 1}$ with $0=t_0^n<...< t^n_k<...< t^n_{N(\pi^n)}=T$ is defined following \cite{perkowski2019}:

\begin{definition}($p$-th variation along a sequence of partitions)\label{def:p-var}
 $x \in C^0([0,T],\R)$  has finite $p$-th variation along the sequence of partitions $\pi=(\pi^n, n\geq 1)$  if there exists a continuous increasing function $[x]^{(p)}_\pi:[0,T]\to\mathbb{R}_+$ such that
	\begin{equation}
	\forall t\in[0,T],\qquad \sum_{\substack{[t^n_j, t^n_{j + 1}] \in \pi^n: \\ t^n_j \le t}} \left| x(t^n_{j + 1}) - x(t^n_j)\right| ^p\mathop{\longrightarrow}^{n\to\infty} [x]^{(p)}_\pi(t).\label{eq.pointwisecv}\end{equation}
If this property holds, then the convergence in \eqref{eq.pointwisecv} is  uniform. We call $[x]^{(p)}_\pi$ the \emph{$p$-th variation} of  $x$ along the sequence of partitions $\pi$. We denote $V^p_\pi([0,T],\mathbb{R})$ the set of all continuous paths with finite $p$-th variation along $\pi$.
\end{definition}

 \begin{remark} For $x\in V^p_\pi([0,T],\mathbb{R})$ we have $[x]^{(p)}_\pi(T)<\infty$. If $$\sum_{\pi^n\cap(a,b)}\left|x(t^n_{i+1})-x(t^n_{i})\right|^p \to \infty$$ for all $(a,b)\subset[0,T]$ then we will write $[x]^{(p)}(t)  = \infty$.
\end{remark}

To formalize the concept of roughness, we introduce the notion of variation index and roughness index of a path:
\begin{definition}[Variation index]\label{def.variationindex}
The variation index of a path $x$ along a partition sequence $\pi$ \ is defined as the smallest $p\geq 1$ for which $x$ has finite $p$-th variation along $\pi$:
    \[p^\pi(x) =  \inf\left\{p\geq 1 \;:\; x\in V^p_\pi([0,T],\mathbb{R}) 
    \right\} .\]
\end{definition}
 
\begin{definition}[Roughness index]\label{def.roughnessindex}
The roughness index of a path $x$ (along $\pi$) is defined as 
\[{H}^\pi(x) = \frac{1}{p^\pi(x)}.\]
\end{definition}
When the underlying sequence of partitions is clear, we will omit $\pi$ and denote these indices as  $p(x)$ and $ {H}(x)$. A similar roughness index was introduced by \cite{Schied2022}.

For a (real-valued) stochastic process $X:[0,T]\times \Omega \to \mathbb{R}$ the roughness index $p^\pi(X(.,\omega))$ of each sample path $X(.,\omega)$ may be  different  in principle. Nevertheless there are many important classes of stochastic processes which have an {\it almost-sure} roughness index.
For example, the roughness index of fractional Brownian motion (fBM) matches with the corresponding Hurst parameter/ H\"older exponent:
\begin{example}
Brownian motion $B$ has variation index $p^\pi(B)=2$   and roughness index $H^\pi(B)=\frac{1}{2}$ along any refining partition sequence $\pi$ or any partition $\pi$ with $|\pi^n|\log{n}\to 0$ \cite{dudley1973}.

\par Fractional Brownian motion $B^H$ has variation index $p^\mathbb{T}(B^H)=\frac{1}{H}$ and roughness index $H^\mathbb{T}=H$ along the dyadic partition sequence $\mathbb{T}$.
\end{example}
In general, the existence of a variation index is not obvious. For further details see \cite{das2022theory}

\subsection{Normalized  $p$-th variation}
It is not easy to use $p$-th variation directly on empirical data for estimating roughness based on discrete observations, as this involves checking convergence to an unknown limit. 
We introduce a normalized version of $p$-th variation which has better asymptotic properties \cite{das2022theory}:
\begin{definition}[Normalized $p$-th variation along a sequence of partitions]
Let $\pi$ be a sequence of partitions of $[0,T]$ with mesh $|\pi^n| \to 0$ and  $\pi^n = \left(0=t^n_1<t^n_2<\cdots< t^n_{N(\pi^n)}=T\right)$.  $x\in V^p_{\pi}([0,T],\mathbb{R})$ is said to have {\sl normalized $p$-th variation along} $\pi$ if  there exists a continuous function $w(x,p,\pi):[0,T]\to \mathbb{R}$ such that:
\ba\label{eq.weighted.p.var} \forall t\in[0,T], \quad \sum_{\pi^n\cap [0,t]}\frac{\left| x\left(t^n_{i+1}\right)-x\left(t^n_{i}\right)\right|^p}{[x]^{(p)}_{\pi}(t^n_{i+1})-[x]^{(p)}_{\pi}(t^n_{i})} \times \left(t^n_{i+1} -t^n_{i}\right) \xrightarrow[]{n\to \infty} w(x, p, \pi)(t).\ea
We denote $ N^p_\pi([0,T],\mathbb{R})$ the class of all continuous functions for which the normalized $p$-th variation\footnote{For $p=2$ we will call this quantity as `normalized quadratic variation'.} exists.
\end{definition}
The terminology is justified by the following result \cite{das2022theory} which shows that, for a large class of functions with $p-$th variation, the  normalized $p$-th variation   exists and is linear:
\begin{theorem} \label{linear.weighted.variation}
Let $x\in V^{p}_\pi([0,T],\mathbb{R})$ for some $p>1$ where  $\pi$ be a sequence of partitions of $[0,T]$ with vanishing mesh $|\pi^n|\to 0$. Furthermore, if the $p$-th variation  is absolutely continuous  then:\[x\in N^{p}_\pi([0,T],\mathbb{R}) \qquad \text{ and  } \qquad \forall t\in [0,T],\; w(x,p,\pi)(t)=t. \] 
\end{theorem}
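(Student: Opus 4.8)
\emph{Proof strategy.} Write $\phi \assign [x]^{(p)}_\pi$, $\delta^n_i \assign t^n_{i+1}-t^n_i$, $\Delta^n_i\phi \assign \phi(t^n_{i+1})-\phi(t^n_i)$, $\Delta^n_i x \assign x(t^n_{i+1})-x(t^n_i)$, let $F_n(s)\assign\sum_{t^n_j\le s}|\Delta^n_j x|^p$, and let $S_n(t)$ denote the partial sum on the left-hand side of \eqref{eq.weighted.p.var}. The plan is to prove $S_n(t)\to t$ for every $t\in[0,T]$ by an approximation argument reducing to a model case in which $\phi$ is piecewise linear. Two facts are used throughout: (a) by Definition~\ref{def:p-var}, $F_n\to\phi$ \emph{uniformly} on $[0,T]$; and (b) $\phi$ is absolutely continuous, i.e. $\phi(s)=\int_0^s\phi'$ with $\phi'\in L^1([0,T])$ nonnegative, equivalently $\phi'$ is uniformly integrable.

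First I would settle the model case in which $\phi$ is piecewise linear with positive slopes. On each linear piece $[a,b]$ of $\phi$, with slope $\kappa>0$, every partition interval contained in $[a,b]$ satisfies $\Delta^n_i\phi=\kappa\,\delta^n_i$, so the part of $S_n(t)$ arising from $[a,b]$ equals $\kappa^{-1}\sum|\Delta^n_i x|^p$ over those intervals, up to the $O(1)$ intervals straddling a breakpoint, whose total contribution is $o_n(1)$ because $x$ is continuous and $|\pi^n|\to0$; by (a) this converges to $\kappa^{-1}(\phi(b)-\phi(a))=b-a$, and summing over pieces gives the limit $t$. (Equivalently, for $\phi\in C^1$ with $0<c\le\phi'\le C$ one writes $S_n(t)=\int_{[0,\langle t\rangle_n]}\phi'_n(s)^{-1}\,dF_n(s)$, where $\phi'_n$ is the step function $\phi'(\xi^n_i)$ on $[t^n_i,t^n_{i+1})$ and $\langle t\rangle_n$ the largest partition point $\le t$; since $\phi'_n\to\phi'$ uniformly, $dF_n\rightharpoonup d\phi$, and $1/\phi'$ is continuous, $S_n(t)\to\int_0^t\phi'(s)^{-1}\,d\phi(s)=t$.) For a general absolutely continuous $\phi$, fix $\epsilon,\eta>0$ and write $\phi'=\psi'+r$ with $\psi'$ a step function, $\eta\le\psi'\le1/\eta$, $\|r\|_{L^1}<\epsilon$; put $\psi(s)=\int_0^s\psi'$ and let $S^\psi_n(t)$ be the analogue of $S_n(t)$ built from $\psi$, so $S^\psi_n(t)\to t$ by the model case. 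It then remains to show $\limsup_{\epsilon\to0}\limsup_{n\to\infty}|S_n(t)-S^\psi_n(t)|=0$.

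I expect this last comparison to be the main obstacle, because the weights $\delta^n_i/\Delta^n_i\phi$ are not uniformly bounded. On partition intervals with $\Delta^n_i\phi\ge\eta\,\delta^n_i$ the difference of summands is at most $\eta^{-2}\,|\Delta^n_i x|^p\,(\delta^n_i)^{-1}\int_{t^n_i}^{t^n_{i+1}}|r|$, and one would control the sum of these using that the measures with density $h_n\assign|\Delta^n_i x|^p(\delta^n_i)^{-1}$ on $[t^n_i,t^n_{i+1})$ have total mass $F_n(\langle t\rangle_n)\to\phi(t)$ and weak limit $d\phi\ll ds$, together with a Vitali--Carathéodory approximation of $|r|$. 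The genuinely delicate part is the ``nearly flat'' index set $\mathcal B^n_\eta=\{i:\Delta^n_i\phi<\eta\,\delta^n_i\}$: here one must show both that $\sum_{\mathcal B^n_\eta}\delta^n_i$ is asymptotically at most $|\{\phi'\le\eta\}|$, and that $\sum_{\mathcal B^n_\eta}\frac{|\Delta^n_i x|^p}{\Delta^n_i\phi}\,\delta^n_i$ is small --- the latter by covering $\{\phi'\le\eta\}$ by an open set $U$ with $\int_U\phi'$ as small as desired and invoking $x\in V^p_\pi([0,T],\R)$ to bound $\sum_{[t^n_i,t^n_{i+1}]\subset U}|\Delta^n_i x|^p\le\int_U\phi'+o_n(1)$; uniform integrability of $\phi'$ then forces $|\{\phi'\le\eta\}|$ and $\int_U\phi'$ to vanish as $\eta\to0$. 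Reconciling the unbounded weights on $\mathcal B^n_\eta$ with the smallness of the local $p$-th variation of $x$ there is the hard point, and it is precisely where absolute continuity --- rather than mere continuity --- of $[x]^{(p)}_\pi$ enters; a singular component in $[x]^{(p)}_\pi$, or an interval of constancy of $\phi$ on which $x$ is non-constant, would destroy the conclusion. An alternative framing worth noting: when $\phi$ is strictly increasing, the substitution $y=x\circ\phi^{-1}$ along the image partitions $\phi(\pi^n)$ produces a path with \emph{linear} $p$-th variation, and reduces the claim to showing that reweighting its (now trivial) normalized $p$-th variation by the increments of the time change $\phi^{-1}$ is asymptotically harmless --- the same difficulty expressed in different coordinates.
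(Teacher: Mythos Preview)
Your strategy is considerably more elaborate than the paper's argument because you take ``absolutely continuous'' at face value and try to handle a derivative $\phi'$ that may vanish or be discontinuous. The paper does not: its proof tacitly strengthens the hypothesis to $\phi\in C^1$ with $g:=\phi'$ bounded away from zero (it writes ``since the $p$-th variation is strictly increasing we have $\sup 1/g<\infty$'' and ``since $g$ is continuous on $[0,T]$ it is bounded''), applies the mean value theorem to get $\Delta^n_i\phi=g(u^n_i)\,\delta^n_i$ with $u^n_i\in[t^n_i,t^n_{i+1}]$, and concludes
\[
S_n(t)=\sum_{\pi^n\cap[0,t]}\frac{|\Delta^n_i x|^p}{g(u^n_i)}\ \xrightarrow[n\to\infty]{}\ \int_0^t\frac{1}{g(u)}\,d[x]^{(p)}_\pi(u)=\int_0^t du=t
\]
via a Riemann--Stieltjes argument. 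Under these extra assumptions this is essentially your $C^1$ ``model case'' (the parenthetical version with $dF_n\rightharpoonup d\phi$ and $1/\phi'$ continuous), and the whole approximation apparatus you build around it is unnecessary.

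Your attempt to push through with merely absolutely continuous $\phi$ goes beyond what the paper actually proves, and the obstacle you isolate is genuine rather than merely technical. On $\mathcal B^n_\eta$ the weights $\delta^n_i/\Delta^n_i\phi$ exceed $1/\eta$, and the bound $\sum_{\mathcal B^n_\eta}|\Delta^n_i x|^p\le\int_U\phi'+o_n(1)$ that you propose does \emph{not} control $\sum_{\mathcal B^n_\eta}|\Delta^n_i x|^p\,\delta^n_i/\Delta^n_i\phi$: the latter could be of order $\eta^{-1}\int_U\phi'$, and since you send both $\eta$ and $\int_U\phi'$ to zero no conclusion follows. Indeed, if $\phi$ is constant on a nondegenerate subinterval the denominators $\Delta^n_i\phi$ vanish and the normalized $p$-th variation is not even well-defined, so the statement as literally formulated needs at least strict increase of $[x]^{(p)}_\pi$; your remark about this is correct. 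For the purpose of matching the paper, though, the short mean-value-theorem argument under the tacit $C^1$/positivity assumption is all that is expected.
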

\begin{proof}
See Appendix.
\end{proof}
The following result shows that normalized $p$-th variation is a `sharp' statistic: if a function has finite $p$-th variation along a sequence of partitions $\pi$  then for all $q\neq p$ the normalized $p$-th variation is either infinite or zero.
\begin{theorem}\label{infinite.zero.weighted.variation}
Let $\pi$ be a sequence of partitions of $[0,T]$ with mesh $|\pi^n|\to 0$. Let $x\in  V^{p}_\pi([0,T],\mathbb{R})$ with $[x]^{(p)}_\pi \in (0,\infty)$ for some $p>1$. 
\begin{itemize}
    \item[(i)] For all $t\in (0,T]$ and for all $ q>p$; $w(x,q,\pi)(t)=\infty$.
    \item[ (ii)]  For all $t\in [0,T]$ and for all $ q<p$; $w(x,q,\pi)(t)=0$
\end{itemize}
\end{theorem}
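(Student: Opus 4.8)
The plan is to reduce both parts to a clean dichotomy for the ordinary $q$-th variation $[x]^{(q)}_\pi$, and then read off the normalised quantity $w(x,q,\pi)$ from it. The only analytic ingredient is a one-sided interpolation bound on the partition sums together with uniform continuity of $x$. Since $x\in C^0([0,T],\R)$ and $[0,T]$ is compact, $x$ is uniformly continuous, so writing
\[
\delta_n\ :=\ \sup_{[t^n_i,t^n_{i+1}]\in\pi^n}|x(t^n_{i+1})-x(t^n_i)|,
\]
we have $\delta_n\to 0$ because $|\pi^n|\to 0$. For any $\pi^n$-increment $a$ of $x$ this yields $|a|^q\le\delta_n^{\,q-p}|a|^p$ when $q>p$, and, once $n$ is large enough that $\delta_n>0$, $|a|^q\ge\delta_n^{\,q-p}|a|^p$ when $q<p$; summing these inequalities drives the two cases.

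For part (i), fix $q>p$ and $t\in(0,T]$ and sum the first bound over $\pi^n\cap[0,t]$:
\[
\sum_{\pi^n\cap [0,t]}|x(t^n_{i+1})-x(t^n_i)|^q\ \le\ \delta_n^{\,q-p}\sum_{\pi^n\cap [0,t]}|x(t^n_{i+1})-x(t^n_i)|^p\ \xrightarrow{n\to\infty}\ 0,
\]
since the $p$-sum tends to the finite limit $[x]^{(p)}_\pi(t)$ while $\delta_n^{\,q-p}\to 0$. Hence $x\in V^q_\pi([0,T],\R)$ with $[x]^{(q)}_\pi\equiv 0$, so every normalising increment $[x]^{(q)}_\pi(t^n_{i+1})-[x]^{(q)}_\pi(t^n_i)$ in the definition of $w(x,q,\pi)(t)$ is zero. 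On the other hand $[x]^{(p)}_\pi(t)>0$ forces $\sum_{\pi^n\cap [0,t]}|x(t^n_{i+1})-x(t^n_i)|^p>\tfrac12[x]^{(p)}_\pi(t)>0$ for all large $n$, so at least one increment $|x(t^n_{i+1})-x(t^n_i)|$ is strictly positive; the corresponding term of the normalised $q$-th variation has the form $c/0$ with $c>0$, so the partial sums equal $+\infty$ for all large $n$ and $w(x,q,\pi)(t)=\infty$.

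For part (ii), fix $q<p$; at $t=0$ the sum is empty and $w(x,q,\pi)(0)=0$, so take $t\in(0,T]$. As $[x]^{(p)}_\pi$ is continuous and increasing with $[x]^{(p)}_\pi(0)=0$ and $[x]^{(p)}_\pi(t)>0$, pick $(a,b)\subseteq(0,t]$ with $[x]^{(p)}_\pi(b)-[x]^{(p)}_\pi(a)>0$; then the $p$-sum over $\pi^n\cap(a,b)$ converges to this positive number, which in particular forces $\delta_n>0$ for large $n$. Summing the reversed bound over $\pi^n\cap(a,b)$,
\[
\sum_{\pi^n\cap (a,b)}|x(t^n_{i+1})-x(t^n_i)|^q\ \ge\ \delta_n^{\,q-p}\sum_{\pi^n\cap (a,b)}|x(t^n_{i+1})-x(t^n_i)|^p\ \xrightarrow{n\to\infty}\ +\infty,
\]
because $\delta_n^{\,q-p}=\delta_n^{-(p-q)}\to+\infty$. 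Thus the $q$-th variation sums over $[0,t]$ diverge, i.e.\ $[x]^{(q)}_\pi(t)=\infty$ in the sense of the Remark following Definition~\ref{def:p-var}; every normalising increment is then infinite, every term of the normalised $q$-th variation equals $(\text{finite})/\infty=0$, and $w(x,q,\pi)(t)=0$.

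The interpolation estimates and the use of uniform continuity are routine; the real care goes into the bookkeeping with $0$ and $\infty$ in the degenerate regime $q\ne p$, where $x\notin V^q_\pi$ in general so $[x]^{(q)}_\pi$ is not an honest finite increasing function and one must argue consistently with the conventions $[x]^{(q)}_\pi\equiv 0$ (for $q>p$) and $[x]^{(q)}_\pi\equiv\infty$ (for $q<p$), verifying that the limiting partition sums are unambiguously $+\infty$, respectively $0$. A secondary subtlety, which is exactly where the hypothesis $[x]^{(p)}_\pi\in(0,\infty)$ (read as $[x]^{(p)}_\pi(t)>0$ for all $t\in(0,T]$) is used, is to rule out flat pieces of $[x]^{(p)}_\pi$ from spoiling part (ii): on such a piece the $q$-sums need not diverge, and by Theorem~\ref{linear.weighted.variation} could even contribute a positive amount to $w(x,q,\pi)$, so one would localise to the support of $d[x]^{(p)}_\pi$ and treat the complementary intervals separately. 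This complication disappears when $[x]^{(p)}_\pi$ is strictly increasing, which is the case of interest.
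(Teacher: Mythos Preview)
Your approach is essentially the same as the paper's: in both cases the argument is to show that for $q>p$ one has $[x]^{(q)}_\pi\equiv 0$, so the denominators in $w^n$ vanish while some numerator is positive, giving $+\infty$; and for $q<p$ one has $[x]^{(q)}_\pi\equiv\infty$ on each block, so every term is $(\text{finite})/\infty=0$. The paper simply asserts these two facts about $[x]^{(q)}_\pi$ and bounds $w^n$ by pulling out $\min$/$\max$ of the ratio, whereas you derive them explicitly via the uniform-continuity interpolation $|a|^q\lessgtr\delta_n^{\,q-p}|a|^p$; your final paragraph on flat pieces of $[x]^{(p)}_\pi$ makes explicit a hypothesis (strict increase) that the paper's proof also uses implicitly when it writes $[x]^{(q)}_\pi(t^n_{i+1})-[x]^{(q)}_\pi(t^n_i)=\infty$ for \emph{every} block.
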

\begin{proof}
The proof is provided in Appendix.
\end{proof}
In particular, Brownian motion almost surely has linear normalized quadratic variation.
\begin{example}[Normalized quadratic variation for Brownian motion]\label{weightedQV-BM}
Let $B$ be a Wiener process on a probability space $(\Omega, {\cal F},\mathbb{P})$, and
$(\pi^n)_{n\geq 1}$ be a sequence of partitions of $[0,T]$ with  $|\pi^n|\log n \to 0$. Then:
\[\mathbb{P}\left(w(B,2,\pi)(t) = t \right)=1. \]
\end{example}
\begin{proof}
The proof is provided in the Appendix.
\end{proof}
\begin{remark}
In the above example, instead of taking any partition sequence with $|\pi^n|\log n \to 0$, we can also take any refining sequence of partitions. The proof is then different and uses the martingale convergence theorem. 
\end{remark}
\begin{example}[Stochastic integrals]\label{stochastic.int} Let $X(t)=\int_0^t \sigma(u)dB_u$  where $\sigma$ is an adapted process with  $ \int_0^T \sigma^2(u)du < \infty$. Then, for any refining partition sequence $\pi$  with vanishing mesh,
\[\mathbb{P}\left(w(X,2,\pi)(t)=t \right)=1. \]
\end{example}
\begin{proof}
This is an immediate consequence of Theorem \ref{linear.weighted.variation}.
\end{proof}
\begin{remark}  
In the statement of Example \ref{stochastic.int}, we can replace the assumption of refining partitions with partitions satisfying $|\pi^n|\log(n)\to 0$.
\end{remark}

\begin{example}[Normalized $p$-th variation for fractional Brownian motion]\label{weightedQV-fBM}
Let $B^H$ be a fractional Brownian motion with Hurst index $H$, on a probability space $(\Omega, {\cal F},\mathbb{P})$. $B^H$ has normalized p-th variation along the dyadic partition $\mathbb{T}=(\mathbb{T}^n)_{n\geq 1}$ for $p=1/H$ almost-surely:  $$\mathbb{P}\left(  w\left(B^H,\frac{1}{H},\mathbb{T}\right)(t)= t \right)=1.$$
\end{example}
\begin{proof}
The proof is a consequence of \cite[Proposition 4.1]{viitasaari2019}.
\end{proof}



\subsection{Estimating roughness from discrete observations}

Given observations on a refining time partition $\pi^L$, we  define the `normalized $p$-th variation statistic' which is the discrete counterpart of  the normalized p-th variation:
\ba \label{statistic.W} W(L,K,\pi,p,t,X)\assign \sum_{\pi^K\cap [0,t]}\frac{\left|X(t^K_{i+1})-X(t^K_{i})\right|^p}{\sum_{\pi^L\cap [t^K_i,t^K_{i+1}]}\left|X(t^L_{j+1})-X(t^L_{j})\right|^p} \times \left(t^K_{i+1}-t^K_{i}\right).\quad\ea
The definition of the statistic \eqref{statistic.W} involves two frequencies: a `block' frequency $K$ and a sampling frequency $L\gg K$.
As the partition is refining, $\pi^K$ is a subpartition of $\pi^L$.  
The denominator is estimated by grouping the sample of size $L$ into $K$ many groups, where each group contains $\frac{L}{K}$  consecutive data points.

The statistic \eqref{statistic.W} converges to the  normalized $p$-th variation \eqref{eq.weighted.p.var}
as the sampling frequency $L$ and block frequency increase:
$$\mathop{\lim}_{K\to\infty}\mathop{\lim}_{L\to\infty}W(L,K,\pi,p,t,x) = w(x,p,\pi)(t).$$
It is thus natural to define roughness estimators for a discretely sampled signal in terms of \eqref{statistic.W}.

The {\it variation index estimator} $\widehat{p}_{L,K}(X)$  associated with the signal sampled on $\pi^L$ is then obtained by computing $W(L,K,\pi,p,t,X)$ for different values of $p$ and  solving the following equation for $p^\pi_{L,K}(X)$,
\begin{equation}
  W(L,K,\pi, \widehat{p}^\pi_{L,K}(X),T,X) = T.   \label{eq.statistic}
\end{equation}
One can either fix a window length $T$ or solve  \eqref{eq.statistic} in a least squares sense across several values of $T$.

An estimator for the   roughness index   is subsequently defined as:
\begin{equation} \widehat{H}^\pi_{L,K}(X) = \frac{1}{\widehat{p}^\pi_{L,K}(X)}.\label{eq.RoughnessEstimator}
\end{equation}
We will denote the roughness estimator \eqref{eq.RoughnessEstimator} as $\widehat {H}_{L, K}$ when the underline dataset and the corresponding partition sequence are clear. Asymptotic properties of these estimators under high-frequency sampling are studied in \cite{das2022theory}.
\subsection{Finite sample behaviour of the roughness estimator} \label{sec.simulation}
We will now study the finite sample behaviour of the roughness estimator $\widehat{H}^\pi_{L, K}(X)$ using high-frequency simulations of fractional Brownian motions.
In the simulation examples  unless mentioned otherwise we will use  a uniform partition sequence of $[0,1]$ with:
\[\pi^n=\left(0<\frac{1}{n}<\frac{2}{n}<\cdots<1 \right).\] 
To assess the finite sample accuracy of the estimator we compare the roughness index estimator $\widehat{ H}^\pi_{L,K}$ with the underlying Hurst exponent $H\in \{0.1,0.3,0.5,0.8\}$. For every simulated path we compute  $W(L=300\times300,K=300,\pi,q,t=1,X=B^H)$ for different values of $q$, in order to estimate $\widehat{H}_{L,K}$. In figure \ref{fig-fbm.sim}, the black line is the value of $\log\big(W(L=300\times300,K=300,\pi,q=p,t=1,X=B^H)\big)$ plotted against roughness index $1/p$ in log-scale. The blue horizontal line represents the estimated roughness index $\widehat{ H}_{L, K}$ whereas the dotted horizontal line represents the  Hurst parameter. 
Figure \ref{fig-hist1} shows the  histograms of the estimator $\widehat{H}^\pi_{L, K}$  generated from $150$ independent paths. We   observe that for   datasets with length  $L=300\times300$ our roughness estimator $\widehat H_{L=300\times300, K=300}$   has satisfactory accuracy. Table \ref{table.1}  provides summary statistics for roughness index $\widehat H$ of simulated fractional Brownian motions. 
\begin{table}[b!]
    \centering
   
\[\begin{tabular}{|c|c|c|c|c|c|c|}  \hline 
   H&    Min. & Lower quartile & Median  &  Mean & Upper quartile &    Max. \\ \hline
0.1 & 0.0650 & 0.0920 & 0.1030 & 0.1009 & 0.1100 & 0.1440   \\ \hline
0.3 &  0.2730 & 0.2940 & 0.2980 & 0.2976 & 0.3020 & 0.3180 \\ \hline
0.5 &  0.4820 & 0.4940 & 0.4980 & 0.4978 & 0.5020 & 0.5140 \\ \hline
0.8 &  0.7570 & 0.7820 & 0.7900 & 0.7891 & 0.7940 & 0.8220 \\ \hline
\end{tabular}\]
   \caption{Summary statistics for estimated roughness index $\widehat{H}_{L, K}$ for fractional Brownian motion $B^H$ with $L=300\times 300, K=300$.}
   \label{table.1}
\end{table}
Figure \ref{fig-high.freq.hist} represents a similar plot for simulated fractional Brownian motion with Hurst parameter $H=0.1$.  In Figure \ref{fig-high.freq.hist}, in left, similar to Figure \ref{fig-fbm.sim}, $\log(W(L=2000\times2000,K=2000,\pi,p,t=1,X=B^H))$ is plotted against $H=1/p$ and the right plot represents the  histograms  of the estimator  $\widehat{H}_{L=2000\times 2000, K=2000}$. The summary statistics for the esimator are provided in Table \ref{table.2}.
\begin{table}[b!]
    \centering
   \[\begin{tabular}{|c|c|c|c|c|c|c|c|}  \hline
      Hurst index H & Min. & Lower quartile & Median  &  Mean & Upper quartile &    Max. \\ \hline
0.1& 0.086& 0.096& 0.099& 0.099& 0.103& 0.117  \\ \hline
\end{tabular}\] \caption{Summary statistics for estimated roughness index $\widehat{H}_{L, K}$ for fractional Brownian motion $B^H$ with $H=0.1, L=2000\times 2000, K=2000$.}
   \label{table.2}
\end{table}
 To compute the estimator $\widehat{H}_{L,K}$   we have different possible choices of  $K\ll L$. Figure \ref{fig-diff.K} shows how the estimator  $\widehat{H}^\pi_{L, K}$ varies with $K$ for  fractional Brownian motion with Hurst parameter $H=0.1$. The black line represents the $\widehat{H}_{L, K}$ plotted against different values of $K$ whereas the blue vertical line represents the value for $ L=300\times300,K=300$. We observe that  when we vary  $K$ in the neighbourhood $K\approx \sqrt{L}$ the estimator performs reasonably well and is not very sensitive to the choice of $K$ in this vicinity.  

In summary, we observe that for realistic sample sizes and frequencies encountered in high-frequency financial data, the estimator is quite accurate and not sensitive to the block size $K$ in the range $K\approx \sqrt{L}$.

\begin{figure}[ht!]
    \centering
    \includegraphics[width=1 \textwidth]{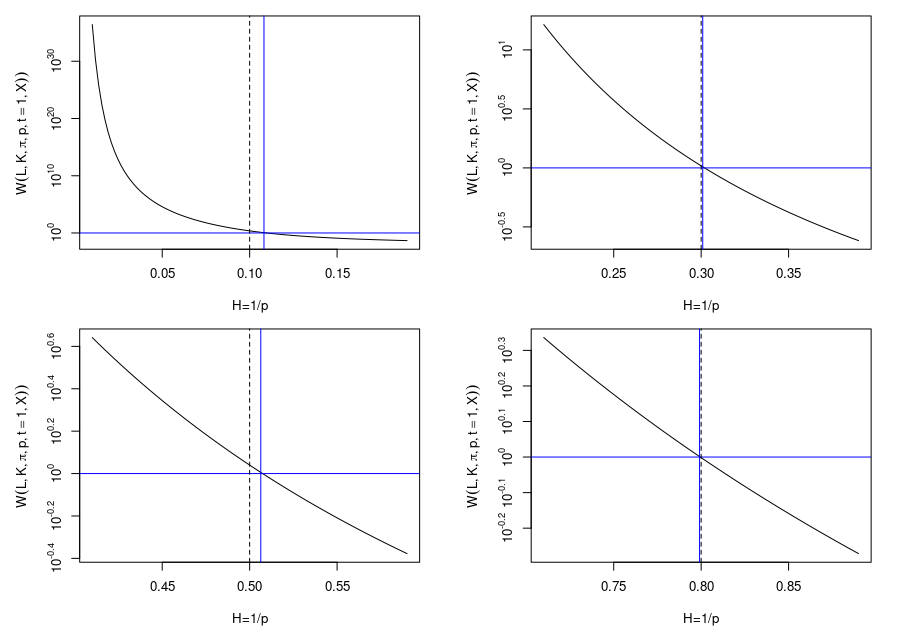}
    \caption{Log scale plot of the normalized p-variation statistic for  fractional Brownian motion with Hurst parameter $H\in \{0.1,0.3,0.5,0.8\}$. The black solid line represents the log of normalized $p$-th variation statistics plotted against $H=1/p$. The blue vertical line represents $\widehat H_{L, K}$ using the normalized $p$-th variation statistics (with $L=300\times 300, K=300)$. The horizontal dotted line represents the true Hurst parameter $H$.
}
    \label{fig-fbm.sim}
\end{figure}

\begin{figure}[ht!]
    \centering
    \includegraphics[width=1 \textwidth]{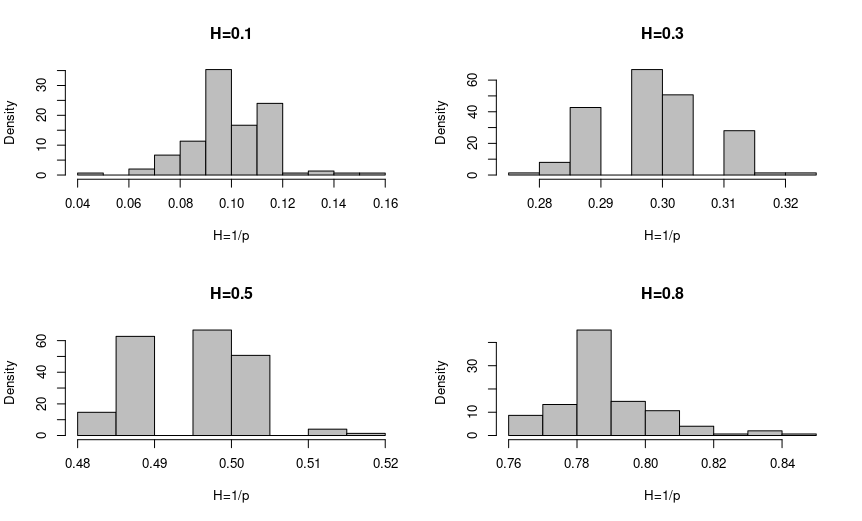}
    \caption{Histogram of estimated roughness index $\widehat H_{L,K}$ with $ L=300\times300,K=300$ generated from $150$ independent simulations of fractional Brownian motion with Hurst parameter $H\in \{0.1,0.3,0.5,0.8\}$.}
    \label{fig-hist1}
\end{figure}

\begin{figure}[ht!]
    \includegraphics[width=1 \textwidth]{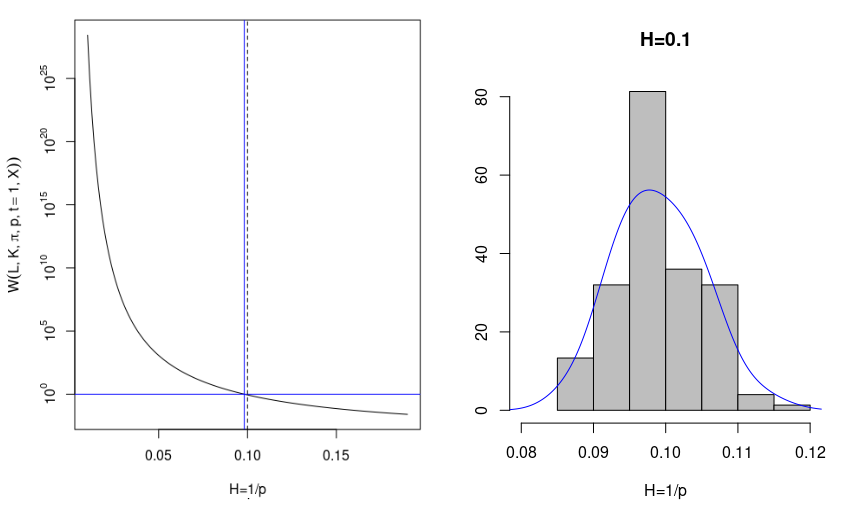}
    \caption{Simulation results for fractional Brownian motion with Hurst parameter $H=0.1$. \textbf{Left: } The log of normalized $p$-th variation statistic is plotted against $H=1/p$ in black. The blue vertical line represents the estimated roughness index $\widehat H_{L, K}$ (with $L=2000\times2000,K=2000$), whereas the dotted line represents the true value $H=0.1$. \textbf{Right: } Histogram of estimated roughness index $\widehat H_{L, K}$ generated by simulating $n=150$ independent fractional Brownian paths with Hurst parameter $H=0.1$. The blue line represents a kernel estimator for the density.}
    \label{fig-high.freq.hist}
\end{figure}

\clearpage
\begin{figure}[ht!]
    \centering
    \includegraphics[width=1 \textwidth]{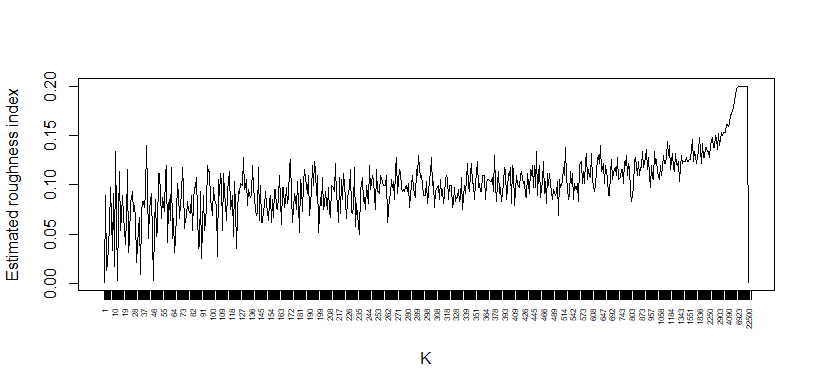}
    \caption{The solid black line represents the estimated roughness index $\widehat H_{L=300\times300, K}$ 
    plotted against different values of $K$ for a  fractional Brownian motion with Hurst parameter $H=0.1$.}
    \label{fig-diff.K}
\end{figure}

\section{Spot volatility and realized volatility}
Contrary to prices of an asset which may be observed and sampled directly from market data, {\it (spot) volatility} is not directly observable and as a consequence must be estimated from prices. Stochastic volatility models represent the price of a financial asset as the solution of a stochastic differential equation driven by a Brownian motion: 
 \begin{equation}
     dS_t= \sigma_t S_t dB_t + \mu_t S_t dt \label{eq.SV}
 \end{equation}
where the coefficient $\sigma_t$ represents the instantaneous or `spot' volatility.
In general, $\sigma_t$ is represented as a random process itself driven by fractional processes.

In a practical situation, the price  $S_t$ at time $t$, is usually observed over a non-uniform time grids of $[0,T]$: 
\ba  \pi^n=\left(0=t^{n}_0<t^n_1<\cdots<t^{n}_{N(\pi^n)}=T\right).\ea

In order to study high-frequency asymptotics of roughness estimators, we assume $|\pi^n|\to 0$ as $n $ increases;  here the index $n$ may be thought of as a `sampling frequency'. An example to keep in mind is the dyadic partition sequence: $\pi^n=\left(t^n_i= \frac{iT}{2^n}, i=0,\cdots,2^n\right)$ but none of the results below requires a uniform grid.

The spot volatility process $\sigma_t$  may then be recovered from the {\it quadratic variation} of the log-price $X= \log S$ along this particular grid:
\begin{equation}
    \label{eq.spotvol} \sigma^2_t= \frac{d}{dt} [\log S ]_\pi(t)
\end{equation} where the quadratic variation $[\log S]_\pi$ of the log-price
 $$[\log S ]_\pi(t) =\mathop{\lim}_{n\to\infty}\sum_{\pi^n\cap [0,t]} \left(\log \frac{S(t^n_{i+1})}{ S(t^n_i)}\right)^2   =\lim_{n\to\infty}RV_t(\pi^n)^2$$
is computed as a high-frequency limit of the {\it realized variance} \citep{barndorff2002,andersen2003} along the sampling grid $\pi^n$, defined as
\ba\label{eq.sigma.hat1} RV_t(\pi^n)^2 = \sum_{\pi^n\cap [0,t]} \left(\log \frac{S(t^n_{i+1})}{ S(t^n_i)}\right)^2 = \sum_{\pi^n\cap [0,t]} \left(X(t^n_{i+1})-X(t^n_i)\right)^2.\ea
The realized volatility  is defined as the square root of the realized variance.

\begin{definition}[Realized volatility]
The {\it realized volatility} of a price process $S$ over time interval $[t,t+\delta]$ sampled along the time partition $\pi^n$ is defined as:
\ba\label{eq.sigma.hat} RV_{t,t+\Delta}(\pi^n)= \sqrt{\sum_{\pi^n\cap [t,t+\Delta]} \left(X(t^n_{i+1})-X(t^n_i)\right)^2}=\sqrt{[X]_{\pi^n}(t+\Delta)-[X]_{\pi^n}(t)} \ea
where $X= \log S$.
\end{definition}

If the price $S_t$ follows a stochastic volatility model \eqref{eq.SV} with {\bf instantaneous volatility} $\sigma_t$ then realized variance converges to the quadratic variation of $\log S$ (also called `{\it integrated variance}') as sampling frequency increases \citep{barndorff2002,jacod2011}:
\ba  \quad \label{eq-microstruction} RV_{t}(\pi^n)^2 \mathop{\to}^{\mathbb{P}}_{n\to\infty} IV_{t} :=\int_{0}^{t} \sigma^2_u du,\qquad RV_{t,t+\Delta}(\pi^n) \mathop{\to}^{\mathbb{P}}_{n\to\infty} \sqrt{IV_{t,t+\Delta} }=\sqrt{\int_{t}^{t+\Delta} \sigma^2_u du}.\ea
Along a single price path observed at high-frequency, we can compute the realized variance \eqref{eq.sigma.hat} and the realized volatility $RV_{t,t+\delta}(\pi^n)$ in \eqref{eq-microstruction} may be used as a practical indicator of volatility:
$$ RV_{t,t+\Delta}(\pi^n) \simeq \sqrt{\Delta}\ \sigma_t.$$

Several empirical studies have attempted to estimate the roughness of `realized volatility' signals using high-frequency data, i.e. \cite{andersen2003,barndorff2002,contmancini2011,jacod2011,podolskij2009}. A well-known reference is the study of \cite{gatheral2014} where the authors estimate the roughness index of S\&P500 realized volatility by performing the following logarithmic regression to :
\ba m(q,\Delta):=  \mathit{\frac{1}{n}[\log RV]^{(q)}_{\pi^n} }= \frac{1}{n}\sum_{t=1}^{n} |\log(RV_{t+\Delta})-\log(RV_t)|^q \approx  C_q \Delta^{\xi_q}.
\label{m(q,delta)}\ea 
The coefficients $\xi_q$ are also shown to behave linearly in $q$:
$$\xi_q \approx q\  \widehat{H}_R.$$ Regression of $\xi_q$ on $q$ yields an estimate $\widehat{H}_R$ of Hurst/H\"older index, for which \cite{gatheral2014} report the value  $\widehat{H}_R=0.13$.  
Based on these observations, they propose a fractional SDE for (spot) volatility:
$$ d\log\sigma_t^2 = \mu_t dt  + \eta dB^H_t.$$
As we see from Equation \ref{m(q,delta)}, the  method used in \cite{gantert1994} actually uses $p$-th variation of the $\log (RV)$ to calculate the roughness of the underlying volatility process. 
Figure \ref{fig-gatheral-original} is a replication of the log-regression model described above to estimate the roughness index of the volatility of $5$-min  S\&P 500. 
However, in an interesting simulation study  using paths simulated from a Brownian OU volatility process,  \cite{roger2019} showed that the scaling behavior claimed as evidence for `rough volatility' is also observed in  a Brownian OU model over a range of time scales, and 
that estimators of the roughness index  based on log-regression of empirical $p$-th variation have poor accuracy. 
\par Similar evidence for the lack of accuracy of such estimators based on log-regression of $p$-th variation  is  shown by \cite{fukasawa2019}.

\begin{figure}[ht!]
    \centering
\includegraphics[width=1 \textwidth]{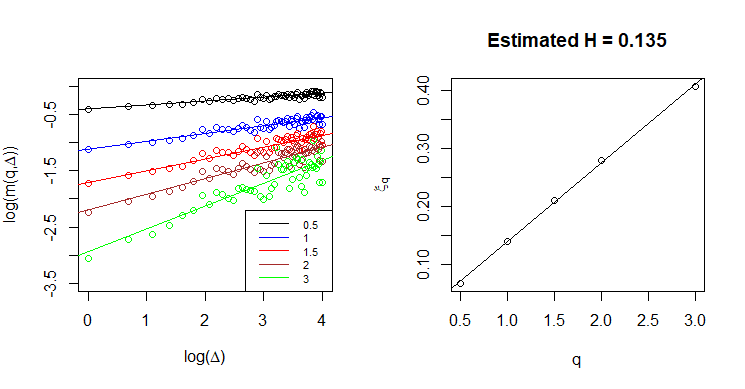}
    \caption{\textbf{Left: }   A reproduction of the log-regression method introduced by  \cite{gatheral2014} using SPX 5-minute realized volatility from the Oxford-Man Institute’s Realized Library. \textbf{Right: } 
    Regression of $\xi_q$ vs $q$: the  estimated slope is $\widehat{H}_R = 0.135$.}\label{fig-gatheral-original}
\end{figure}


\section{Numerical experiments}
We now compare various estimators of the roughness index  for instantaneous volatility $\sigma_t$ with those obtained from   realized volatility $RV_{t,t+\Delta}(\pi^n)$ using price trajectories simulated from stochastic volatility models with varying degrees of ``roughness".

\subsection{Stochastic volatility diffusion models}\label{sec.SV}

Let us first consider the following stochastic volatility where volatility is simply (the modulus of) a Brownian motion:
\begin{example}\label{example.simulation}
\ba\label{model} dS_t=\sigma_tS_tdB_t, \;\text{ with}\quad \sigma_t=|W_t|,\ea
where  $B, W$ are two  independent Brownian motions.
\end{example}
Figure \ref{fig-realized.vol.simulation} represents a path of the instantaneous volatility $\sigma_t$ and the realized volatility $RV_t(\pi^L)$  computed as in Equation \ref{eq.sigma.hat}  by taking 300 consecutive data-points, which corresponds to a 5-minute moving window. The right plot of Figure \ref{fig-realized.vol.simulation} represents the estimation error, which is defined as the difference of instantaneous and realized volatility. The ACF of the estimation error shows a complex time-dependent pattern which rules out IID behavior and indicates a complex dependence structure.

The estimated roughness index of instantaneous and realized volatility are observed to be very different. In the left graph of Figure \ref{fig-comp.H.simulatred} we plot $\log(W(K=500,L=500\times500,\pi,p,t=1,X=RV))$ against $H=1/p$ for the realized volatility. On the other hand,  the right graph is the same plot with the same set of parameters but for instantaneous volatility. The estimated roughness index for realized volatility ($\widehat H_{L=500\times500, K=500}(RV)=0.27$) is significantly smaller than the roughness index of the instantaneous volatility ($\widehat H_{L=500\times500, K=500}(\sigma)=0.49$) suggesting rougher behaviour of realized volatility.  
 As in our simulation study we do not have any measurement errors, this roughness behaviour purely comes from estimation error. In some studies it is assumed that the estimation error or the log-estimation error is I.I.D. (see for example \cite{fukasawa2019}) but as one can see from this diffusion example, both the estimation error and the log-estimation error is far from I.I.D. and hence the assumptions put forth for example in \cite{fukasawa2019}, is not very realistic for general diffusion models.

\par  The solid black lines in Figure \ref{fig-diff.K.realized} and Figure \ref{fig-diff.K.realized.instentanouus} respectively represent the estimated roughness index $\widehat H_{L= 300\times300,K}$ plotted against different values of $K$ for the realized and instantaneous volatility (model \ref{model}). The blue vertical line represents for $K=500, L=500\times500$. From the figures, we can observe that irrespective of the choice of $K$ for the finite sample dataset of length $L=500\times 500$, the realized volatility is significantly rougher than the instantaneous volatility.

We now compare our roughness estimator with  the log-regression method suggested in \cite{gatheral2014} for the model \ref{model}. It turns out that even with the log-regression model, similar `rougher' realized volatility is observed even if the instantaneous volatility has Brownian diffusive behaviour. Figure \ref{fig-gatharal-trueVol} and Figure \ref{fig-gatheral-realizedVol} show that the realized volatility has a significant smaller roughness index than the instantaneous volatility even with respect to the log regression method. In this example it is clear that the roughness observed in realized volatility is attributable to the discretization error (`estimation error') and not the roughness of the spot volatility process, which is Brownian.
 
 \begin{figure}[ht!]
    \centering
    \includegraphics[width=1 \textwidth]{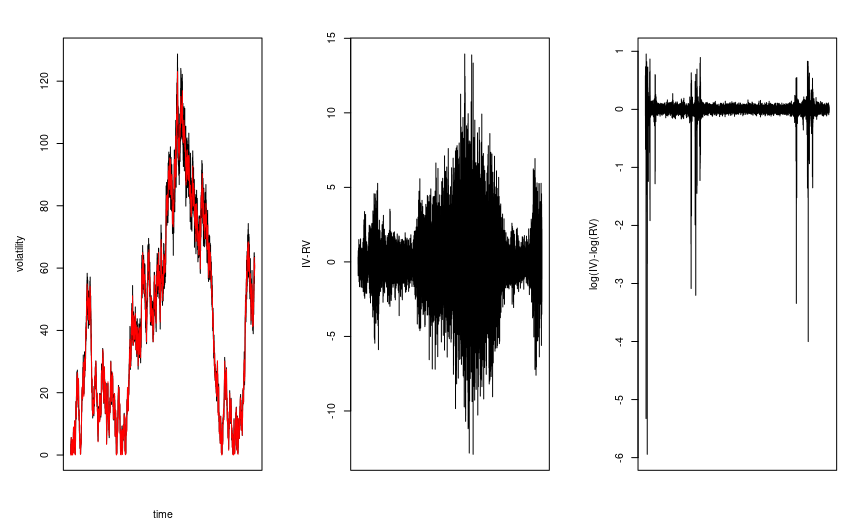}
    \caption{Simulation model: $\sigma_t=|B'_t| , dS_t=S_t\sigma_tdB_t$, where  $B_t$ and $B'_t$ are Brownian motions independent of each other. \textbf{Left:} The red line represents instantaneous volatility $\sigma_t$ whereas the black line represents realized volatility $RV_t$. \textbf{Right:} Corresponding estimation error for the left simulated path. } 
    \label{fig-realized.vol.simulation}
    \end{figure}

\begin{figure}[ht!]
\centering
    \includegraphics[width=1 \textwidth]{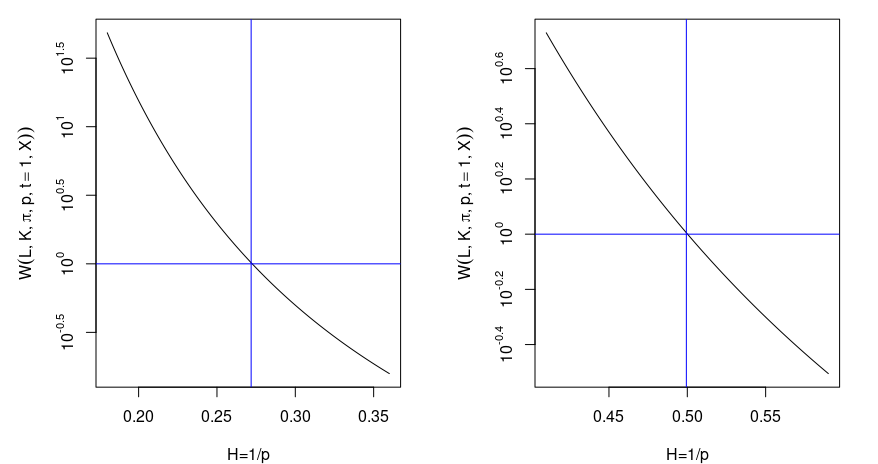}
    \caption{\textbf{Left: } Estimated roughness index $\widehat H_{L,K}$ (via normalized $p$-variation statistic with $L=500\times 500,K=500$), for realized volatility derived from a Brownian diffusion model, shown in Figure \ref{fig-realized.vol.simulation}. \textbf{Right: } Estimated roughness index $\widehat H_{L,K}$ for instantaneous volatility of the same price path. }
    \label{fig-comp.H.simulatred}
\end{figure}

\begin{figure}[ht!]
    \includegraphics[width=.9 \textwidth]{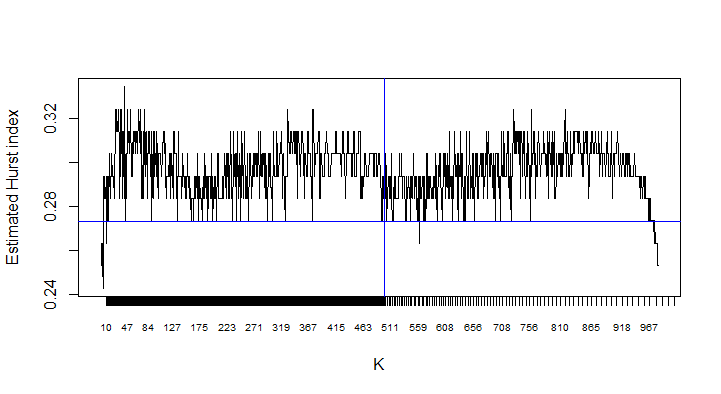}    
    \caption{The solid black line represents the estimated roughness index $\widehat H_{L,K}$ via normalized $p$-th variation statistic $W(L= 500\times500,K,\pi,q,t= 1,X=RV)$ plotted against different values of $K$ for the realized volatility shown in Figure \ref{fig-realized.vol.simulation}. The blue vertical line represents $L=500\times 500,K=500$ whereas the blue horizontal line represents $\widehat H=0.273$.}
    \label{fig-diff.K.realized}
    \end{figure}

\begin{figure}[ht!]
    \includegraphics[width=.9 \textwidth]{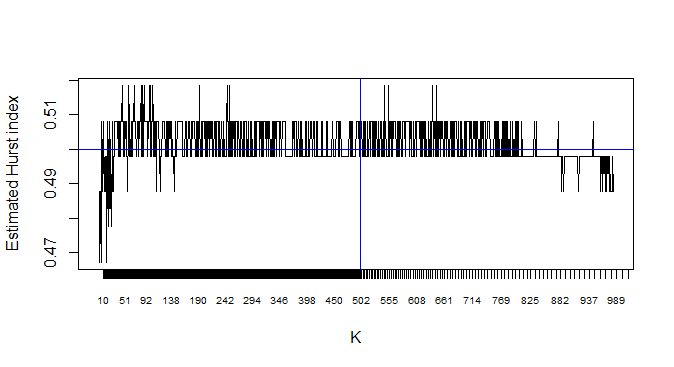}
    \caption{The solid black line represents the estimated roughness index $\widehat H_{L,K}$ via normalized $p$-th variation statistic $W(L= 500\times500,K,\pi,q,t= 1,X=IV)$ plotted against different values of $K$ for the instantaneous volatility shown in Figure \ref{fig-realized.vol.simulation}. The blue vertical line represents $ L=500\times500,K=500$ whereas the blue horizontal line represents true Hurst parameter $H=0.5$.}
    \label{fig-diff.K.realized.instentanouus}
\end{figure}

\begin{figure}[ht!]
\includegraphics[width=.9 \textwidth]{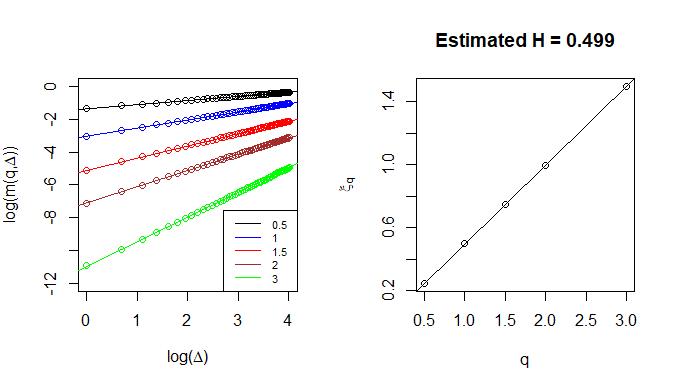}
        \caption{\textbf{Left: }  Scaling analysis of instantaneous volatility of a  simulated Brownian stochastic volatility model (Example \ref{example.simulation}) using the same log-regression method introduced in \cite{gatheral2014}. \textbf{Right: } regression coefficients $\xi_q$ as a function of $q$. The estimated roughness index is $\widehat{H}_R = 0.499$.  }\label{fig-gatharal-trueVol}
\end{figure}

\begin{figure}[ht!]
    \centering
\includegraphics[width=.9 \textwidth]{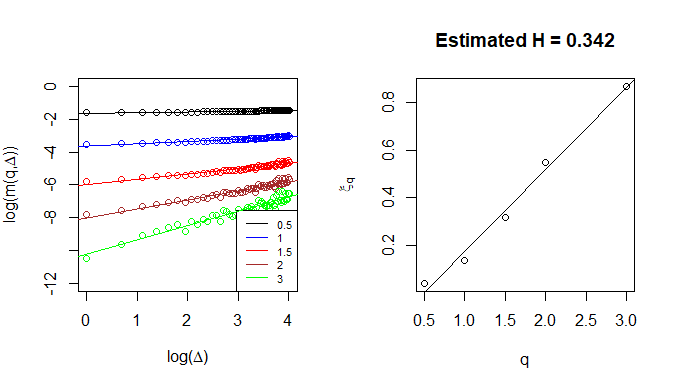}
    \caption{\textbf{Left: }  Scaling analysis of realized volatility of a  simulated Brownian stochastic volatility model (Example \ref{example.simulation}) using the same log-regression method introduced in  \cite{gatheral2014}. \textbf{Right: } regression coefficients $\xi_q$ as a function of $q$. The estimated roughness index is $\widehat{H}_R = 0.342$.}\label{fig-gatheral-realizedVol}

\end{figure}

Next we consider a more realistic mean-reverting volatility model in which the volatility follows a Brownian Ornstein–Uhlenbeck process:
\begin{example}[OU-SV model]\label{ex-OU-SV}
\ba dS_t=S_t\sigma_tdB_t ,\qquad
\sigma_t=\sigma_0 e^{Y_t}, \quad dY_t=-\gamma Y_tdt+\theta dB_t' \label{eq.OU-SV}\ea
where $B_t$ and $B'_t$ are two independent Brownian motions.
\end{example}
In the simulation, we use $\sigma_0=1, Y_0=0$ and $\gamma=\theta=1$.
The left plot of Figure \ref{fig-microstructure.ou} represents the realized volatility (respectively instantaneous volatility) of the price process in black (respectively red) simulated from the above stochastic volatility model \ref{eq.OU-SV}. The middle plot of Figure \ref{fig-microstructure.ou} represents the corresponding estimation error, which is the difference between the realized and the instantaneous volatility. Visually the middle plot suggests the estimation error has a complicated dependence structure. But unlike the plot for Example \ref{example.simulation}, the log error on the right plot of Figure \ref{fig-microstructure.ou} has an I.I.D. Gaussian structure (This is supported by the theory provided in \cite{fukasawa2019}). 
\par Now we compare the distribution of the estimator $\widehat{H}_{L,K}$ with $(L=300\times 300,K=300)$ for realized and instantaneous volatility across $2500$ independent paths drawn from \eqref{eq.OU-SV}. 
The left plot in Figure \ref{fig-kernel.plot} is the distribution of $\widehat{H}_{L,K}$ for the realized volatility while the right plot corresponds to the same for instantaneous volatility.  The following table provides summary statistics for the   estimator $\widehat{H}_{L,K}$ with ${L=300\times 300,K=300}  $ across  $2500$ independent sample paths  for realized volatility  and instantaneous volatility respectively. 
\begin{table}[t]
    \centering
\[\begin{tabular}{| c| c| c |} \hline
  & Realized volatility & Instantaneous volatility \\ \hline
 Min.  & 0.087     & 0.528  \\ \hline 
 1st Quantile &    0.128    &0.552 \\ \hline 
 Median    &     0.136    & 0.556 \\ \hline 
 Mean  &    0.137    &   0.557  \\ \hline 
 3rd Quantile    &   0.148    &   0.563 \\ \hline 
 Max.   &  0.181    &    0.581  \\ \hline

\end{tabular}\]
    \caption{Estimated roughness index $\widehat{H}_{L,K}, L=300\times300,K=300  $ for realized volatility  and instantaneous volatility for the diffusion model \eqref{model} with $H=0.5$. }
    \label{table.3}
\end{table}

\begin{figure}[ht!]
        \centering
        \includegraphics[width=0.9 \textwidth]{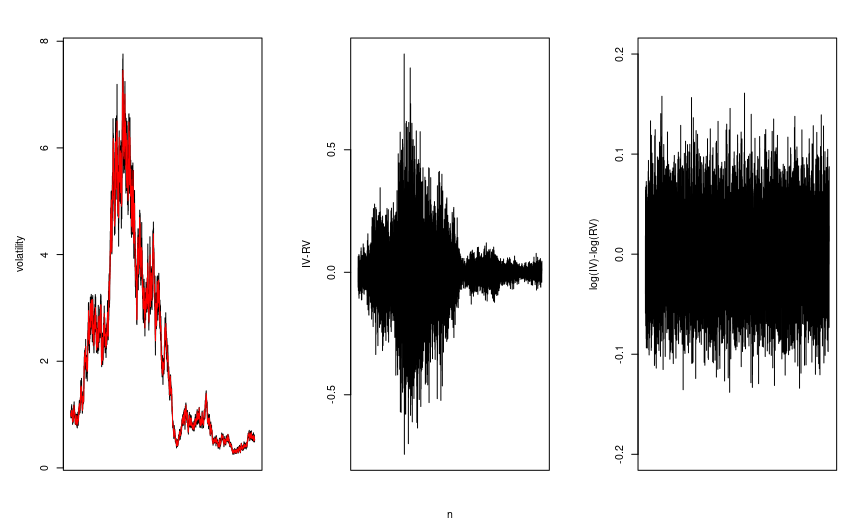}
        \caption{\textbf{Left: } Realized volatility (in black) vs instantaneous volatility (red) for the OU price model   \eqref{eq.OU-SV}. \textbf{Right: }  Estimation error.}
        \label{fig-microstructure.ou}
    \end{figure}

\begin{figure}[ht!]
    \centering
    \includegraphics[width=.9 \textwidth]{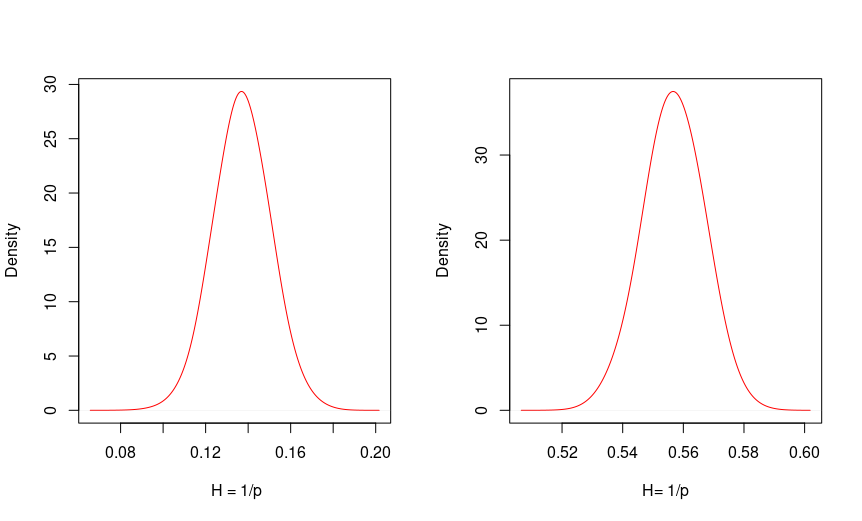}
    \caption{Distribution of the estimated roughness index $\widehat{H}_{L,K}$ for $(L=300\times 300,K=300)$ across  $2500$ independent simulations for the OU-SV model \eqref{eq.OU-SV} with roughness index $H=0.5$. \textbf{Left: } Realized volatility. \textbf{Right: } Instantaneous volatility. }
    \label{fig-kernel.plot}
\end{figure}
\clearpage

\subsection{A fractional Ornstein-Uhlenbeck model}
In both  previous examples, instantaneous volatility follows a diffusive behaviour similar to Brownian motion with $H=\frac{1}{2}$,  yet the realized volatility exhibits  "rough" behaviour with an estimated roughness index significantly smaller than $0.5$. 
\par We now consider the more general case of a price process generated by a stochastic volatility model of the type \eqref{eq.comterenault} where instantaneous volatility has a general roughness index $H\in (0,1)$ and explore how the roughness index of the instantaneous volatility reflects on the roughness index of realized volatility and the corresponding estimation error.
\begin{example}\label{ex-fou}[Fractional OU process]
Consider the following price process, where the volatility is described by a fractional Ornstein–Uhlenbeck process:
\ba dS_t= \sigma_tS_t dB_t, \quad 
 \sigma_t=\sigma_0 e^{Y_t}; \qquad dY_t= -\gamma Y_tdt+\theta dB^H_t,\label{FOU.eq}\ea
where $\gamma = \theta =\sigma_0=1$, $B$ is a Brownian motion and $B^H$   fractional Brownian motion with Hurst index $H\in(0,1)$. 
\end{example}
We compute the realized volatility and compare the estimated  roughness index $\widehat{ H}_{L,K}$ (with, $L=300\times300, K=300$) of instantaneous and realized volatility in the following table.
\[\begin{tabular}{|c|c|c|}\hline 
          H &  Instantaneous volatility  &  Realized volatility \\ \hline
         0.10 & 0.130 & 0.190 \\ \hline
         0.20 & 0.215 & 0.250 \\ \hline
         0.30 & 0.310 & 0.258 \\ \hline
         0.40 & 0.413 & 0.207 \\ \hline
         0.50 & 0.507 & 0.130 \\ \hline
         0.60 & 0.601 & 0.087 \\ \hline
         0.70 & 0.678 & 0.061 \\ \hline
         0.80 & 0.756 & 0.052 \\ \hline
\end{tabular}\]
The corresponding pictures for price process, realized volatility and the instantaneous volatility from Model \ref{FOU.eq} with Hurst index $H=\{0.05,0.1,0.2,0.3,0.4,0.5,0.6,0.7,0.8\}$ respectively are presented in Figure \ref{fig-OU.fractional.RV.IV}. Visually we can observe that for smaller $H$, the instantaneous volatility is rougher than realized volatility but as we increase $H$ the realized volatility shows significantly rougher behaviour than the instantaneous volatility. In Figure \ref{realizedH.fig} for the simulated models in Figure \ref{fig-OU.fractional.RV.IV}, we plot the estimated roughness index $\widehat{ H}_{L,K}(RV)$ and $\widehat{H}_{L, K}(\sigma)$ respectively in the red and blue line.
Though the estimated roughness index of instantaneous volatility (represented in blue line) gives an accurate estimate of Hurst index $H$, the  roughness index for realized volatility always stays below $0.3$. In particular, when the   instantaneous volatility exhibits smoother behaviour (corresponding to $H\geq 0.5$) the estimated roughness index of realized volatility turns out to be a   poor estimate for the  Hurst index. 

  Figure \ref{realizedH1.fig} shows the corresponding estimators $\widehat{ H}_{L,K}(RV)$ and $\widehat{H}_{L,K}(\sigma)$ for $100$ independent simulated price paths from \eqref{FOU.eq}.  The bold black lines represent the mean  across $100$ independent simulations whereas the dotted lines represent the corresponding $25\%$ and $75\%$ confidence intervals. For the price process  \eqref{FOU.eq}, no matter what the value of the Hurst exponent for instantaneous volatility, the roughness index of  realized volatility $\widehat{H}_{L,K}(RV)$ is always estimated to be between $0$ to $0.3$. 
  
  These examples illustrate our point:  one {\it cannot draw the conclusion} that (spot) `volatility is rough' i.e. reject the null hypothesis $H(\sigma)=1/2$ just because  realized volatility exhibits `rough' behaviour with $\widehat{H}_{L,K}(RV) <\frac{1}{2}$ or $\widehat{H}_R< 1/2$, {\it even} when these estimators exhibit values well below $1/2$.

\begin{figure}[hb!]
    \centering
    \includegraphics[scale=.57]{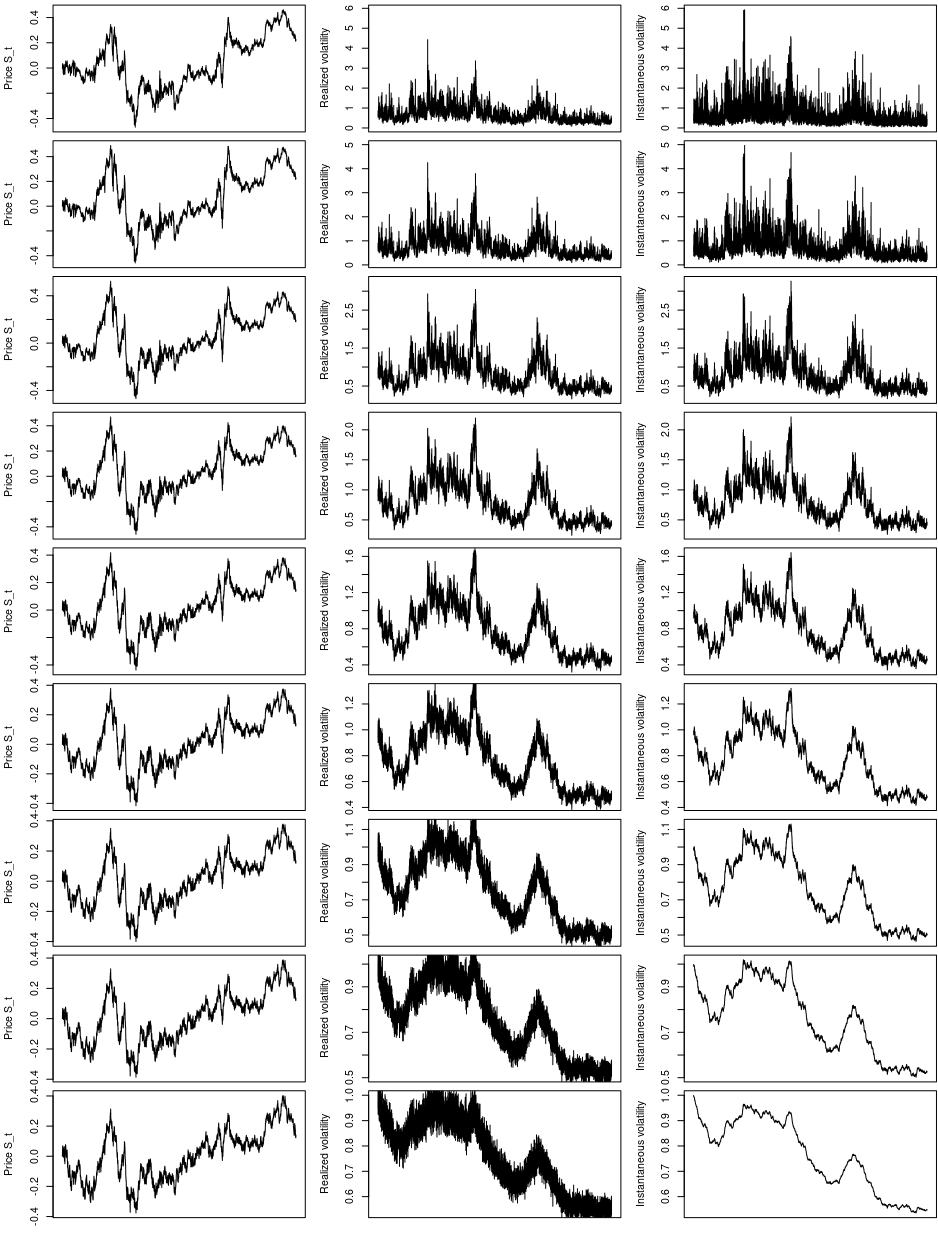}
    \caption{\textbf{Left: }Simulated price path $S_t$ of fractional OU model (Equation \ref{FOU.eq}) with\\ H=\{0.05,0.1,0.2,0.3,0.4,0.5,0.6,0.7,0.8\} respectively, \textbf{Center:} Realized volatility, \textbf{Right:} Instantaneous volatility.}
    \label{fig-OU.fractional.RV.IV}
\end{figure}

\begin{figure}[ht!]
    \centering
    \includegraphics[width=0.7\textwidth]{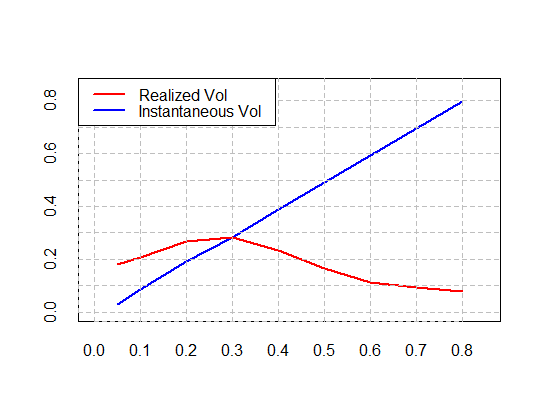}
    \caption{Estimated roughness index $\widehat{H}_{L=300\times 300,K=300}$ for realized volatility and instantaneous volatility from a high-frequency fractional-OU stochastic volatility model (Equation \eqref{FOU.eq}), plotted for price path generated with different values of $H$.}
    \label{realizedH.fig}
\end{figure}

\begin{figure}[ht!]
    \centering
    \includegraphics[width=0.7\textwidth]{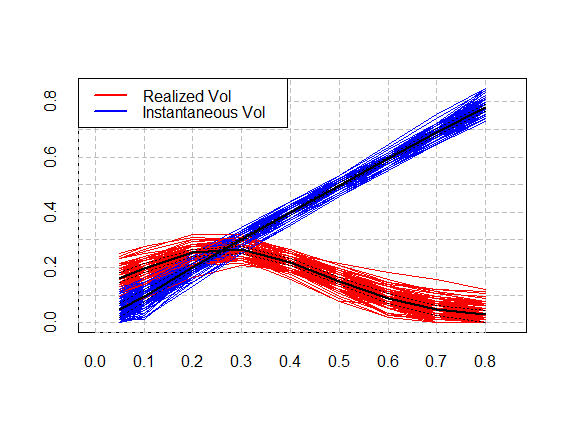}
    \caption{   Estimated roughness index $\widehat{H}_{L=300\times 300, K=300}$ for realized volatility and instantaneous volatility for a high-frequency simulation of a fractional-OU stochastic volatility model   \eqref{FOU.eq}, plotted against different Hurst index $H$ for $100$ independent price paths.}
    \label{realizedH1.fig}
\end{figure}
\clearpage

\section{Application to high-frequency financial data}
Having extensively explored the performance of our roughness estimator $\widehat{H}_{L, K}$ based on the normalized $p$-th variation statistic for spot and realized volatility on simulated price paths, we now apply it to  high-frequency financial time series. 

\subsection{AAPL} 
Figure \ref{apple.price} (left) shows the second-by-second record of AAPL stock price. The right graph of Figure \ref{apple.price} is $\log(W(L=1900\times 1900,K=1900,\pi,p,t=1,X))$ plotted against Hurst parameter $H=1/p$ for `AAPL' price. 
\par The left plot of Figure \ref{fig-apple.vol.estimate} represents $1$-minute realized volatility of   `AAPL' in $2016$. The right graph of Figure \ref{fig-apple.vol.estimate} is $\log(W(L=310\times310,K=310,\pi,p,t=1, X=RV))$ plotted against   $H=1/p$ for the 1-min AAPL realized volatility. Fixing the value of $L=310\times 310$, if we deviate the value of $K$ a little, then the estimated roughness index varies between $0.08$ to $0.22$. This is consistent with the results of  \cite{gatheral2014} regarding realized volatility. But as our simulation study suggests, the roughness index of realized volatility may be very different from that of spot  volatility which is the quantity modelled in continuous-time stochastic volatility models.
\subsection{SP500}
Several studies on rough volatility, including the original study \cite{gatheral2014}, are based on the  Oxford-Man Institute Realized Volatility dataset \footnote{\url{https://realized.oxford-man.ox.ac.uk/data}}.
Figure \ref{fig-oxford.man.realised.vol} represents the plot of  5-minute realized volatility of SP500. The X-axis represents date. The right graph of Figure \ref{fig-oxford.man.realised.vol} is $\log(W(L=70\times70,K=70,\pi,p,t=1,X=RV))$ plotted against Hurst parameter $H=1/p$ for the 5-min Oxford-Man institute realized volatility data. Fixing the value of $L=70\times 70$, if we deviate the value of $K$ a little, the estimated roughness index $\widehat H_{L, K}$ varies between $0.05$ to $0.25$. This finding is again consistent with the findings in  \cite{gatheral2014}. 

Overall, the picture that emerges from SP500 and AAPL data is quite similar to the one observed in simulations of diffusion-type stochastic volatility models discussed in Section \ref{sec.SV}.
As observed in Section \ref{sec.simulation}, these observations are fully compatible with a diffusion-type stochastic volatility model such as \eqref{eq.OU-SV} and one  cannot reject the null hypothesis $H(\sigma)=1/2$ just because  realized volatility exhibits `rough' behaviour with $\widehat{H}_{L,K}(RV) <\frac{1}{2}$ or $\widehat{H}_R< 1/2$, {\it even} though these estimators exhibit values around $0.1$.

\begin{figure}[ht!]
\centering
    \includegraphics[width=.9 \textwidth]{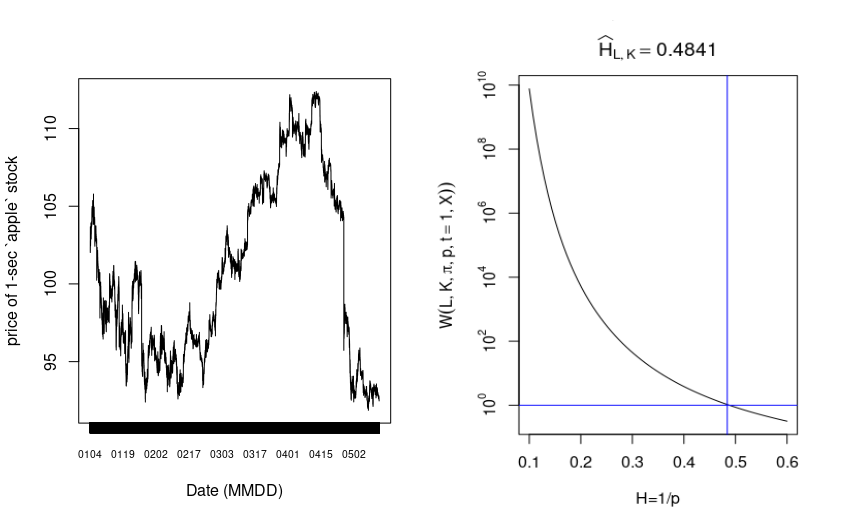}
     \caption{\textbf{Left: }   price of AAPL  04/Jan/2016 -  11/May/2016 (90 days). \textbf{Right: } Roughness index estimator $\widehat H_{L,K}$(with $L=1400\times 1400,K=1400$) for AAPL stock price data. }
    \label{apple.price}
\end{figure}

\begin{figure}[ht!]
    \centering
    \includegraphics[width=.9 \textwidth]{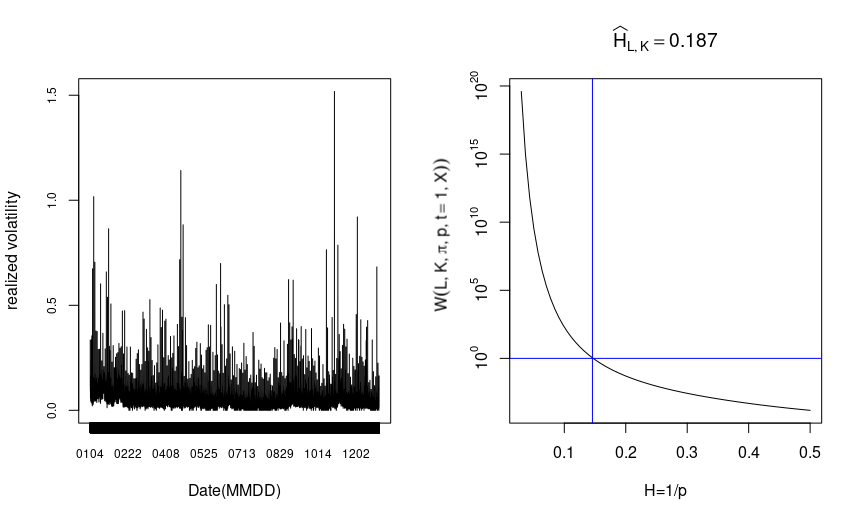}
    \caption{\textbf{Left: } Plot of 1-min realized volatility of `AAPL' (year 2016). \textbf{Right: }  Estimated roughness index $\widehat H_{L,K}$ (with $L=310\times 310,K=310$) for the 1-min realized volatility (estimated roughness index $\widehat H_{L,K}\in [.08-.22]$). }
    \label{fig-apple.vol.estimate}
\end{figure}

\clearpage
\begin{figure}[ht!]
    \centering
    \includegraphics[width=.9 \textwidth]{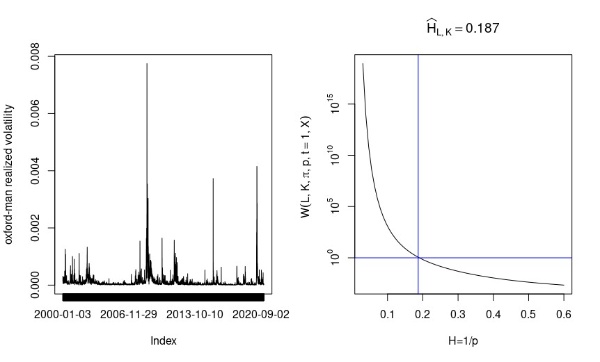}
    \caption{\textbf{Left: } S\&P500  5-min realized volatility. \textbf{Right: }  Estimated roughness index  $\widehat H_{L,K}\in [.05-.25]$ with $L=70\times70,K=70$. }
    \label{fig-oxford.man.realised.vol}
\end{figure}

\section{Rough volatility ... or estimation error?}
Given the  large literature on `rough volatility' in quantitative finance, it is somewhat surprising that the initial claim \cite{gatheral2014} that one needs to model the spot volatility process using a `rough' fractional noise with Hurst exponent $H< 1/2$ has not been examined more closely, especially given that several follow-up studies \citep{fukasawa2019,roger2019} point to the fact that the observations in \cite{gatheral2014} may well be compatible with a  Brownian diffusion model for volatility.

Our detailed examples illustrate that,
for stochastic-volatility diffusion models driven by Brownian motion as described in Examples \ref{example.simulation} and \ref{ex-OU-SV},  the realized volatility has a roughness index $\widehat{H_{L, K}}\approx 0.3$ so  exhibits an `apparent roughness' which instantaneous volatility does not have, both in terms of normalized $p$-th variation statistics and also in terms of the log- regression method used by  \cite{gatheral2014}. Clearly in these simulation examples this is entirely due to the discretization error or `estimation error'.

These results suggest that one cannot hastily conclude that the roughness observed in realized volatility is an indicator of similar behaviour in  spot volatility, as implicitly assumed in the `rough volatility' literature; the observations in high-frequency financial data are in fact compatible with a stochastic volatility model drive by Brownian motion and the origin of this apparent roughness may very well lie in estimation error rather than the noise process driving spot volatility. 
 
Also, as shown in Example \ref{ex-fou},  the rough behaviour of realized volatility does not lead us to reject the hypothesis that the underlying spot  volatility may be  modeled with a Brownian diffusion model or even a smoother model with long-range dependence and $H>1/2$. This observation, together with “Occam's razor", pleads for the use of  Markovian stochastic volatility models which seem compatible with the empirical evidence but are far more tractable. 

We are thus drawn to concur with \cite{roger2019} that ``the notion that volatility is rough, that is, governed by a fractional Brownian motion (with $H< 1/2$), is not an incontrovertible established fact; simpler models explain the observations just as well."


\bibliographystyle{apalike} 
\bibliography{RoughVolatility}       

\appendix
\subsection*{Proof of Theorem \ref{infinite.zero.weighted.variation} (i).}
We will prove the result for $t=T$, for general $t$ the proof generalizes without further complication. Since $q>p$ and the function $x$ has finite $p$-variation along $\pi$, the pathwise $q$-th variation $[x]^{(q)}_{\pi}(t)= 0$ for all $t\in [0,T]$. Hence given any fixed $h>0$ we have $[x]^{(q)}_{\pi}(t+h)-[x]^{(q)}_{\pi}(t) = 0$. Now given any $M>0$ and any $n>1$, we will show that $w(x,q,\pi)(t)>M$. 
For fix $n>1$ we have:
\[\text{ for all } t^n_i \in \pi^n, \qquad  [x]^{(q)}_{\pi}(t^n_{i+1})-[x]^{(q)}_{\pi}(t^n_i) = 0.\]
Define the `unfeasible estimator' 
\[w^n(x,q,\pi)(T) = \sum_{\pi^n}\frac{|x(t^n_{i+1})-x(t^n_{i})|^q}{[x]^{(q)}_{\pi}(t^n_{i+1})-[x]^{(q)}_{\pi}(t^n_{i})} \times (t^n_{i+1}-t^n_{i}).\]
Since $x\in V^{p}_\pi([0,T],\mathbb{R})$ for some $p>0$, for fix large enough $n$, we have $\sum_{\pi^n}|x(t^n_{i+1})-x(t^n_{i})|^q  \times (t^n_{i+1}-t^n_{i})>0$.
So we get the lower bound of $ w^n(x,q,\pi)$ as follows.
\[w^n(x,q,\pi)(T) \geq  \min_{t^n_i\in \pi^n} \left[\frac{1}{[x]^{(q)}_{\pi}(t^n_{i+1})-[x]^{(q)}_{\pi}(t^n_{i})} \right] \sum_{\pi^n}|x(t^n_{i+1})-x(t^n_{i})|^q \times (t^n_{i+1}-t^n_{i})\]
\[ \geq C \left[\frac{1}{\max_{t^n_i\in \pi^n}([x]^{(q)}_{\pi}(t^n_{i+1})-[x]^{(q)}_{\pi}(t^n_{i}))} \right] =\infty >M.\]
Since for all $n>1$, $w^n(x,q,\pi)(t) = \infty$ we can conclude the following: 
\[w(x,q,\pi)(t) = \lim_{n\to \infty} w^n(x,q,\pi)(t) = \infty.\]

\subsection*{Proof of Theorem \ref{infinite.zero.weighted.variation}(ii).}

We will prove the result for $t=T$, for general $t$ the proof follows exactly the same way. Since the function $x$ has finite $p$-variation along $\pi$ and $q<p$ from the assumption, the pathwise $q$ -th variation $[x]^{(q)}_{\pi}(t)= \infty$ for all $t\in [0,T]$. Hence, given any fixed $h>0$ we have $[x]^{(q)}_{\pi}(t+h)-[x]^{(q)}_{\pi}(t) = \infty$. Now given any $\epsilon>0$ and any $n>1$, we will show that $w(x,q,\pi)(t)<\epsilon$ by showing that $w^n(x,q,\pi)(t)<\epsilon$ for all large $n$. 
For fix $n>1$ we have:
\[\text{ for all } t^n_i \in \pi^n, \qquad  [x]^{(q)}_{\pi}(t^n_{i+1})-[x]^{(q)}_{\pi}(t^n_i) = \infty. \text{ So, }\]
\[w^n(x,q,\pi)(T) = \sum_{\pi^n}\frac{|x(t^n_{i+1})-x(t^n_{i})|^q}{[x]^{(q)}_{\pi}(t^n_{i+1})-[x]^{(q)}_{\pi}(t^n_{i})} \times (t^n_{i+1}-t^n_{i})\]\[\leq \max_{t^n_i\in \pi^n}\left[\frac{|x(t^n_{i+1})-x(t^n_{i})|^q}{[x]^{(q)}_{\pi}(t^n_{i+1})-[x]^{(q)}_{\pi}(t^n_{i})}\right]  \times \sum_{\pi^n}(t^n_{i+1}-t^n_{i})\] \[\leq T \times \max_{t^n_i\in \pi^n}\left[\frac{|x(t^n_{i+1})-x(t^n_{i})|^q}{[x]^{(q)}_{\pi}(t^n_{i+1})-[x]^{(q)}_{\pi}(t^n_{i})} \right]\]
\[ \leq T \times \max_{t^n_i\in \pi^n} |x(t^n_{i+1})-x(t^n_{i})|^q \times \left[\frac{1}{\min_{t^n_i\in \pi^n}([x]^{(q)}_{\pi}(t^n_{i+1})-[x]^{(q)}_{\pi}(t^n_{i}))} \right] =0 <\epsilon.\]
Since for all $n>1$, $w^n(x,q,\pi)(t) = 0$;
\[w(x,q,\pi)(t) = \lim_{n\to \infty} w^n(x,q,\pi)(t) = 0.\]

\subsection*{Proof of Theorem \ref{linear.weighted.variation}}
For convenience,  assume $g(u)= \frac{d}{du}[x]^{(p)}_\pi(u)$. Since the $p$-th variation is strictly increasing we have $\sup_{t\in [0,T]}\frac{1}{g(u)}$ $<\infty$. Since $g(u)$ is continuous in a compact interval $[0,T]$, it is also bounded. So for all $t\in [0,T]$, we have $\frac{1}{g(u)}\in (0,\infty)$. So as a consequence of mean value theorem,
\[ w^n(x,p,\pi)(t) = \sum_{\pi^n\cap [0,t]}\frac{|x(t^n_{i+1})-x(t^n_{i})|^p}{[x]^{(p)}_{\pi}(t^n_{i+1})-[x]^{(p)}_{\pi}(t^n_{i})} \times (t^n_{i+1}-t^n_{i})\]\[ =  \sum_{\pi^n\cap [0,t]}\frac{|x(t^n_{i+1})-x(t^n_{i})|^p}{g(u^n_i)}, \]
where  $u^n_i\in [t^n_i,t^n_{i+1}]$ for all $n\geq 1$ and for all $i=0,\cdots, N(\pi^n)-1$. Finally, using properties of Riemann integral we can conclude:
\[   \sum_{\pi^n\cap[0,t]}\frac{|x(t^n_{i+1})-x(t^n_{i})|^p}{g(u^n_i)} \xrightarrow[]{n\to \infty} \int_0^t \frac{1}{g(u)}d[x]^{(p)}_\pi(u)\]\[ = \int_0^t \frac{d[x]^{(p)}_\pi(u)/du }{g(u)}du = \int_0^tdu=t.\]
Since the limit always exists we can conclude the proof.

\subsection*{Proof of Example \ref{weightedQV-BM}}
Let $B$ be a Wiener process on the canonical Wiener space $(\Omega, {\cal F},\mathbb{P})$,   i.e $\Omega=C^0([0,T],\mathbb{R}), B(t,\omega)=\omega(t)$ and $\mathbb{P}$ is the Wiener measure.
Let $\pi^n=(0=t^{n}_0<t^n_1<\cdots<t^{n}_{N(\pi^n)}=T)$ be a sequence of partitions of $[0,T]$ satisfying $|\pi^n|\log n \to 0$. Then the results of Dudley \cite{dudley1973}   imply that
$$\mathbb{P}\left( \sum_{\pi^n} |B(t^n_{i+1}\wedge t)-B(t^n_{i}\wedge t)|^2\mathop{\to}^{n\to\infty} t\right)=1. $$
So if we set $\Omega_0=\Omega\cap Q_\pi([0,T],\mathbb{R})$ then
$\mathbb{P}\left(\Omega_0\right)=1$ and any $\omega\in \Omega_0$ satisfies $[\omega]_\pi(t)=t$. Now for any $\omega\in \Omega_0$ we also have the following:
\[w(\omega,2,\pi)(t) = \lim_{n\to \infty}w^n(\omega,2,\pi)(t) = \lim_{n\to \infty}\sum_{\pi^n\cap[0,t]}\frac{\left(\omega(t^n_{i+1})-\omega(t^n_{i})\right)^2}{[\omega]_{\pi}(t^n_{i+1})-[\omega]_{\pi}(t^n_{i})} \times (t^n_{i+1}-t^n_{i}).\]
\[ =\lim_{n\to \infty}\sum_{\pi^n\cap[0,t]}\frac{(\omega(t^n_{i+1})-\omega(t^n_{i}))^2}{(t^n_{i+1}-t^n_{i})} \times (t^n_{i+1}-t^n_{i}) =\lim_{n\to \infty} \sum_{\pi^n\cap[0,t]}(\omega(t^n_{i+1})-\omega(t^n_{i}))^2 =t . \]
So for Brownian motion, normalized quadratic variation   is almost surely equal to $t$. 
\begin{remark}
From Lemma \ref{weightedQV-BM} we know that, for any partition sequence $\pi$ with $|\pi^n|\log (n) \to 0$, there exists $\Omega_{\pi}\subset \Omega$ with $\mathbb{P}(\Omega_{\pi})=1$ such that:
\[\forall \omega\in \Omega_{\pi}, \; \forall t\in[0,T], \qquad w(\omega,2,\pi)(t) =t.\]
We also have the following relation between quadratic variation and normalized-QV for Brownian paths in the class $\Omega_\pi$.
\[\forall \omega\in \Omega_{\pi}, \; \forall t\in[0,T], \qquad w(\omega,2,\pi)(t) =[\omega]_{\pi} (t) .\]
i.e. the null set for quadratic variation and normalized-quadratic variation of Brownian motion are the same.
\end{remark}

\end{document}